\let\Algorithm\algorithm
\renewcommand\algorithm[1][]{\Algorithm[#1]\setstretch{1}}
\tikzstyle{data44}=[rectangle split,rectangle split parts=2,draw,text centered]
\DeclareMathAlphabet{\mathcal}{OMS}{cmsy}{m}{n}
\tikzset{
  BarreStyle/.style =   {opacity=.3,line width=14 mm,color=#1},
  node style ge/.style={},
  node style sp/.style={},
  yl/.style={},
  arrow style mul/.style={},
}
\newtheorem{proposition}{Proposition}
\newtheorem{remark}{Remark}
\newtheorem{definition}{Definition}
\newtheorem{criterion}{Criterion}
\newcolumntype{P}[1]{>{\centering\arraybackslash}p{0.66cm}}
\begin{document}

\title{Formulating Connectedness in Security-Constrained Optimal Transmission Switching Problems}
%Ensuring Network Connectedness in Optimal Transmission Switching
\author{Tong Han,
        David J. Hill,~\IEEEmembership{Life Fellow, IEEE},
        and Yue Song,~\IEEEmembership{Member, IEEE}
 \thanks{This work was supported by the Research Grants Council of the Hong Kong Special Administrative Region through the General Research Fund under Project No. 17209419.}
 \thanks{T. Han, D. J. Hill, and Y. Song are with the Department of Electrical and Electronic Engineering, University of Hong Kong, Hong Kong (e-mail: hantong@eee.hku.hk; yuesong@eee.hku.hk; dhill@eee.hku.hk).}% <-this % stops a space
 \thanks{D. J. Hill is also with the School of Electrical Engineering and Telecommunications, The University of New South Wales, Kensington, NSW 2052, Australia (e-mail: dhill@eee.hku.hk).} 
}

% make the title area
\maketitle

% As a general rule, do not put math, special symbols or citations
% in the abstract or keywords.
\begin{abstract}
    This paper focuses on the issue of network connectedness (NC) in security-constrained optimal transmission switching problems, which is complicated by branch contingencies and corrective line switching. Two criteria are firstly proposed with the principle of preserving NC as much as possible within reasonable limits. By extending the electrical flow based NC constraints, a proposition is derived to associate different cases of NC with the optimum of a linear program, yielding the mathematical formulation of the NC criteria. By Karush-Kuhn-Tucker conditions,  this formulation is further transformed into a tractable version which can be incorporated with existing SCOTS models without affecting the applicability of original solution approaches. Finally, case studies on various networks and SCOTS models demonstrate the efficacy of the proposed approach.
\end{abstract}

% Note that keywords are not normally used for peerreview papers.
\begin{IEEEkeywords}
    connectedness, connectivity, network topology, security-constrained optimal transmission switching
\end{IEEEkeywords}

\IEEEpeerreviewmaketitle

\section*{Nomenclature}

\begin{IEEEdescription}[\IEEEusemathlabelsep\IEEEsetlabelwidth{$V_1V$}]   
\item[$\!\!\!\!\!\!$$\mathcal{G}(\mathcal{V}, \mathcal{E})$] The undirected graph representing the transmission network topologically where $\mathcal{V}$ and $\mathcal{E}$ are sets of buses and branches, respectively. 
% Assume $\mathcal{G}(\mathcal{V}, \mathcal{E})$ has $n_{\rm s}$ connected node-induced subgraphs.
\item[$\!\!\!\!\!\!$$n_{\rm s}$] Number of connected node-induced subgraphs in $\mathcal{G}$.
\item[$\!\!\!\!\!\!$$M$] Big-M constant.  
\item[$\!\!\!\!\!\!$$n_{\rm n},\! n_{\rm g},\! n_{\rm b}$] Numbers of buses, generators, and branches.
\item[$\!\!\!\!\!\!$$\eta$, $\lambda$] Maximal number of fault components, $\lambda \in \mathbb{Z}^{+}$ 
\item[$\!\!\!\!\!\!$$\bm{p}_{\rm g}$] Active power outputs of generators. 
% \item[$\!\!\!\!\!\!$$\bm{p}_{\rm d}$, $\bm{p}_{\rm b}$] Active load powers, active powers of branches.  
% \item[$\!\!\!\!\!\!$$\bm{\theta}$] Voltage phase angles of buses. 
\item[$\!\!\!\!\!\!$$\bm{z}$] Statuses of branches. An entry value of 1/0 represents the associated branch is switched on/off.
\item[$\!\!\!\!\!\!$$\tilde{\bm{z}}, \bar{\bm{z}}$]  Counterpart of $\bm{z}$ for the post-contingency/control topology.
% \item[$\!\!\!\!\!\!$$\bm{p}_{\rm g}^{\rm max}$] Maximum active power outputs of generators.    
% \item[$\!\!\!\!\!\!$$\bm{\theta}_{\rm max}$] Maximum phase angle differences of branches.
% \item[$\!\!\!\!\!\!$$\bm{p}_{\rm b}^{\rm max}$] Power capacities of branches. 
\item[$\!\!\!\!\!\!$$\bm{p}_{\rm g_+}$, $\bm{p}_{\rm g_-}$] Upward/downward regulations of active power outputs of generators.   
% \item[$\!\!\!\!\!\!$${\bm{r}}_{\rm g_+}$, ${\bm{r}}_{\rm g_-}$] Upward/downward ramp rates of generators
% \item[$\!\!\!\!\!\!$$\bm{p}_{\rm d_{\!\Delta}}\!$, $\!\bm{p}_{\rm d_{\Delta}}^{\rm max}$] Amounts of load shedding and its upper bound. 
\item[$\!\!\!\!\!\!$$\bm{z}_{+}$, $\bm{z}_{-}$]  Action signs of switching on/off branches. An entry value of 1/0 means a/no switching action performed.
\item[$\!\!\!\!\!\!$$\bm{o}$]  Parameterization of $N\!-\!k$ contingencies. Entry values of 1/0 indicate the normal/failure state of components.
\item[$\!\!\!\!\!\!$$\bm{o}_{\rm g}$, $\bm{o}_{\rm b}$]  Sub-vectors of $\bm{o}$ for generators and branches. 
% \item[$\!\!\!\!\!\!$$\bm{E}_{\rm g}$] Incidence matrix between buses and generators.
% \item[$\!\!\!\!\!\!$$\bm{E}_{\rm d}$] Incidence matrix between buses and loads. 
\item[$\!\!\!\!\!\!$$\bm{E}$] Oriented incidence matrix of graph $\mathcal{G}$ with each branch assigned arbitrary and fixed orientation.
% \item[$\!\!\!\!\!\!$$\bm{B}$] Diagonal matrix formed by susceptance of each branch.
\item[$\!\!\!\!\!\!$$f(\cdot), g(\cdot)$] Function of dispatch cost and corrective control cost.
\item[$\!\!\!\!\!\!$$P$, $\mathcal{P}$] Probability distribution of $\bm{o}$ and its ambiguity set.
\item[$\!\!\!\!\!\!$$\epsilon$, $\bm{1}$, $\bm{H}$] $\epsilon >0$, all-ones vector, all-ones matrix. 
\end{IEEEdescription}
{
\footnotesize{
\textit{Note}: Except for $\bm{z}$, $\tilde{\bm{z}}$, $\bar{\bm{z}}$, $\bm{z}_+$, $\bm{z}_-$ and $\bm{o}_{\rm b} \in \mathbb{B}^{n_{\rm b}}$, $\bm{o}_{\rm g} \in \mathbb{B}^{n_{\rm g}} $, and $\bm{o} \in \mathbb{B}^{n_{\rm g} + n_{\rm b}} $, other bold lowercase letters are vectors in $\mathbb{R}$ with proper dimension.}}

\section{Introduction}

\IEEEPARstart{G}{rowing} penetration of variable renewable energy decreases generation-side dispatchablility. Optimal transmission switching (OTS), which leverages grid-side flexibility to improve economic performance, is going to be engaged in future network operations more widely and actively.
%  \cite{4-970}.

The problem of how to formulate network connectedness (NC) constraints in a strict and tractable way is essential for OTS problems. Two recent works filled this gap by approaches based on electrical flow and Miller-Tucker-Zemlin constraints \cite{4-995-ea, 4-1283}, respectively. 
However, for more practical security-constrained OTS (SCOTS) problems, the issue of NC becomes more complicated due to branch contingencies and potential corrective line switching. For instance, for $N-k$ contingencies with a relatively large $k$, it is infeasible to always ensure NC. Additionally, due to inherent local weak connection, even an $N-1$ contingency on the network with all branches switched on can cause network disconnection (ND). These problems hinder ensuring NC for all topologies after contingencies and corrective line switching and thus a direct application of NC constraints in \cite{4-995-ea} and \cite{4-1283}. Therefore, SCOTS problems require new NC criteria and their tractable mathematical formulation which can be incorporated into existing SCOTS models. 

This paper addresses the NC issue in SCOTS problems. First, two criteria for NC of topologies in SCOTS are proposed considering the actual situation in transmission networks. Then, these two criteria are mathematically formulated by extending the electrical flow based NC constraints \cite{4-995-ea}. Specifically, by designing parameters in the electrical flow based NC constraints and constructing a linear program (LP), we derive a proposition to associate different cases of NC with the optimum of the LP. This proposition yields a tractable mathematical formulation of the criteria. Finally, we demonstrate the efficacy of the proposed approach numerically.

\section{Models and Connectedness Problems of SCOTS}

\subsection{Models of SCOTS}

We first introduce the SCOTS model to make the paper self-contained. Depending on the way contingencies are treated, SCOTS can be a stochastic, robust, or distributionally robust (DR) optimization problem. More generally, we adopt the two-stage DR formulation of SCOTS, 
%, with DC power flow,  % for online version only
written as 
\begin{subequations}\label{eq-0-2-1} 
    \vspace{-6pt} 
    \setlength{\belowdisplayskip}{1pt} 
    \begin{align}
        \!\!\!\! \min_{\bm{p}_{\rm g}, \bm{z}} ~&  f(\bm{p}_{\rm g} ) + \sup_{P \in \mathcal{P}} \!\mathbb{E}_{\!P}  
        \left[ \begin{aligned}
            \min_{\bm{x} } ~&  g( \bm{x} )~\text{s.t.}~  \bm{x} \!\in\! \mathcal{Z}(\bm{p}_{\rm g},\! \bm{z},\! \bm{o})  
        \end{aligned}   \!\!\!\! \right] \\[-2pt]
        \!\!\!\! \text{s.t.} ~
        &  (\bm{p}_{\rm g}, \bm{z}) \in \mathcal{X}  % for submitted version only
        % & (\bm{p}_{\rm g}, \bm{z}) \in \mathcal{X}(\bm{p}_{\rm d}, \bm{1}) % for online version only
    \end{align} 
\end{subequations}
where $\bm{x} \!\!=\!\! [ \bm{p}_{\rm g_+}^T, \bm{p}_{\rm g_-}^T, \bm{p}_{\rm d_{\!\Delta}}^T, \bm{z}_{+}^T,  \bm{z}_{-}^T  ]^T$, and the first and second stages are the dispatch problem under the normal state and corrective control problem after a contingency, respectively. When $\mathcal{P}$ contains only the true distribution of $\bm{o}$, (\ref{eq-0-2-1}) reduces to the stochastic version of SCOTS. On the other hand, if $\mathcal{P}$ contains all probability distributions on the support of $\bm{o}$, then (\ref{eq-0-2-1}) reduces to the robust version of SCOTS.

\subsection{Network Connectedness Problems in SCOTS}
For SCOTS, the issue of NC arises from three operating states, i.e., the normal, post-contingency, and post-control states. Fig. \ref{fig-0-2-1} illustrates topologies under these three states. Hereafter, we also use $\bm{z}$, $\tilde{\bm{z}}$ and $\bar{\bm{z}}$ to refer to the corresponding topologies for brevity. NC constraints proposed in \cite{4-995-ea} can be used to ensure NC of $\bm{z}$, while for $\tilde{\bm{z}}$ and $\bar{\bm{z}}$, these constraints are inapplicable due to the following practical situations: 

\textit{(1)} For many transmission networks, even an $N-1$ contingency on the network with all branches switched on can cause ND. This generally creates a main connected component containing almost all buses, and other components with a few buses. It is reasonable to ignore such \textit{inevitable ND} (see Fig. \ref{fig-0-2-1} for explanation of this concept) since it is caused by inherent local weak connection instead of line switching. For stochastic SCOTS generally formulated with pre-defined contingency scenarios, by identifying contingencies causing inevitable ND beforehand, NC constraints in \cite{4-995-ea} may still be applicable. However, for the robust and DR versions which are generally not scenario-based, such pre-identification is infeasible.

\textit{(2)} It is generally impossible to ensure NC of $\tilde{\bm{z}}$ for all $N-k$ contingencies when the maximal number of fault components is relatively large. Reasonably, NC can be ensured only when the number of fault components is below a threshold. In addition, NC of $\bar{\bm{z}}$ also cannot be always ensured since the post-contingency topology may be disconnected. Thus requirements for NC of $\bar{\bm{z}}$ should depend on NC of the post-contingency topology.

\begin{figure}[t]
	\centering
	\includegraphics[width=8.8cm]{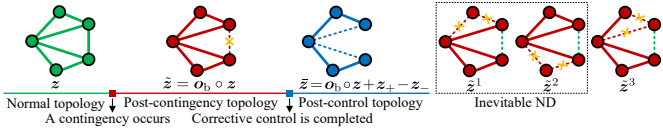} 
	\caption{Illustration of topologies under normal, post-contingency, and post-control states, and inevitable ND. Assume that the green graph is the network with all branches switched on and the dashed green line is switched off by OTS. Three different contingencies result in disconnected topologies $\tilde{\bm{z}}^1$, $\tilde{\bm{z}}^2$ and $\tilde{\bm{z}}^3$. We say that $\tilde{\bm{z}}^1$ and $\tilde{\bm{z}}^2$ are \textit{inevitably disconnected} since even the green line is switched on, these NDs will occurs. We call such ND \textit{inevitable ND}. In contrast, $\tilde{\bm{z}}^3$ is not inevitably disconnected since if the green line is switched on, $\tilde{\bm{z}}^3$ will be connected.}
	\label{fig-0-2-1} 
\end{figure}

\section{Criteria and Formulation of Connectedness}

\subsection{Criteria for Network Connectedness in SCOTS}

Define \textit{$\lambda$-branch contingencies} as the contingencies whose number of fault branches is not more than $\lambda$. 
    % The value of $\lambda$, related to the edge connectivity of graph $\mathcal{G}(\mathcal{V}, \mathcal{E})$, can vary with different transmission networks. 
Then, considering points \textit{(1)} and \textit{(2)}, we propose two criteria for NC in SCOTS problems as follows: 
\begin{criterion}\label{crit-0-2-1}
    $\tilde{\bm{z}}$ is connected for all $\lambda$-branch contingencies, ignoring inevitable NDs.
\end{criterion} 
\begin{criterion}\label{crit-0-2-2}
    Corrective line switching should not further disconnect the network when $\tilde{\bm{z}}$ is connected or disconnected with only inevitable ND.
\end{criterion} 
\begin{remark}
    Criterion \ref{crit-0-2-2} indicates that when $\tilde{\bm{z}}$ is connected, $\bar{\bm{z}}$ should be connected; when $\tilde{\bm{z}}$ is disconnected with only inevitable ND, $\bar{\bm{z}}$ should not create new ND; and otherwise, $\bar{\bm{z}}$ can be connected or not. The principle of the above criteria is to preserve NC after contingencies and corrective line switching as much as possible within reasonable limits.
\end{remark}

\subsection{Formulation of Network Connectedness in SCOTS}

The two criteria above are further formulated mathematically, based on the NC constraints proposed in \cite{4-995-ea}. To begin with, we introduce a parameterized region $\mathcal{C}(\phi, \bm{c}, \bm{d}) = $
\begin{equation}\label{eq-0-2-4} 
    \!\!\! \left\{\!\! 
    \bm{u} \!\!\in\!\! \mathbb{B}^{n_{\rm b}} \!\! \left|
   \begin{aligned} 
       & M (\bm{u}  \!-\! \bm{1} \!-\!  \phi \bm{1}) \!\!\leq\!\! \bm{E}^T \!\bm{\vartheta} \!-\! \bm{\rho}  \!\!\leq\!\! M (\bm{1} \!-\! \bm{u} \!+\!   \phi \bm{1})  \\  
       & \!-\! M (\bm{u} \!+\! \phi \bm{1}) \!\leq\! \bm{\rho} \!\leq\! M (\bm{u} \!+\! \phi \bm{1})  \\
       & \!-\! \phi M \bm{1} \!\!\leq\!\! \bm{E} \bm{\rho} \!-\! (\bm{c} \!+\! \bm{d}) \!\!\leq\!\! \phi M \bm{1}, \bm{\vartheta} \!\in\! \mathbb{R}^{n_{\rm n}}\!, \bm{\rho} \!\in\! \mathbb{R}^{n_{\rm b}}
   \end{aligned} \!\!\!
   \right.
   \right\}
   \vspace{-2pt}
\end{equation}
with $\phi \!\!\in\!\! \mathbb{R}$, $\bm{d} \!\!\in\!\! \mathbb{R}^{n_{\rm n}}$, and $\bm{c}$ being an $n_{\rm n}$-dimensional constant uniquely-balanced vector (see \cite[Definition~1]{4-995-ea} for its definition). When $\phi\!=\!0$ and $\bm{d} \!=\! \bm{0}$, constraints in (\ref{eq-0-2-4}) reduce to the NC constraints in \cite{4-995-ea}, such that $\mathcal{C}$ is the region of $\bm{u}$ whose associated topology is connected. When $\phi\geq1$ and $\bm{d} = \bm{0}$, constraints in (\ref{eq-0-2-4}) are all invalid. The general idea below to formulate the two criteria is to design $\bm{c}$ particularly rather than just being uniquely-balanced, such that $\bm{d}$, satisfying certain conditions, can be used to identify and ensure more complex cases of NC.

Let $\mathcal{W}(\lambda) \!=\! \{ (\mathcal{L}_i, \mathcal{N}_i)| i\!=\!1,2,...,n_{\rm w} \}$ be the set of all $n_{\rm w}$ pairs of ($\mathcal{L}_i, \mathcal{N}_i$) that satisfy the following three conditions:
\textit{(\romannumeral1)} $\mathcal{L}_i \!\subset\! \mathcal{E}$, $\mathcal{N}_i \!\subset\! \mathcal{V}$, and $|\mathcal{L}_i| \!\leq\! \lambda$; 
\textit{(\romannumeral2)} removing ${\mathcal{L}}_i$ from $\mathcal{G}(\mathcal{V}, \mathcal{E})$ causes ND with $\mathcal{N}_i$ being the set of buses not in the main connected component;
\textit{(\romannumeral3)} $\nexists \mathcal{L}' \subset \mathcal{L}_i$ and $\mathcal{L}' \neq \emptyset$, removing $\mathcal{L}_i \backslash \mathcal{L}'$ from $\mathcal{G}(\mathcal{V}, \mathcal{E})$ causes ND with $\mathcal{N}_i$ being the set of buses not in the main connected component. 
For short, NDs in condition \textit{(\romannumeral2)} are referred to as $\mathcal{W}(\lambda)$ \textit{NDs} 
and the associated graphs are $\mathcal{W}(\lambda)$-\textit{disconnected}. % only for the proof
Thereby, the set of $\mathcal{W}(\lambda)$ NDs and that of all inevitable NDs in Criterion \ref{crit-0-2-1} are identical. Note that lines in $\cup_{i=1}^{n_{\rm w}} \mathcal{L}_i$ are assumed not to be switched off in $\bm{z}$ since such switching facilitates inevitable ND.

Next, introduce matrix $\bm{J}$ being the same as $\bm{J} \in \mathbb{R}^{n_{\rm s} \times n_{\rm n}}$ in \cite{4-995-ea}, $\bm{E}_{\rm w} \in \mathbb{R}^{n_{\rm w} \times n_{\rm n} }$ satisfying that $\forall i = 1,2,...,n_{\rm w}$, $\bm{E}_{{\rm w}, ij} = 1$ if $j \in \mathcal{N}_i$ and $\bm{E}_{{\rm w}, ij} = 0$ otherwise, and $\bm{J}_{\rm w}$ formed by deleting the common rows between $\bm{J}$ and $\bm{E}_{\rm w}$ and between $\bm{J}$ and $\bm{H} - \bm{E}_{\rm w}$, and the all-ones row, from $\bm{J}$. Denote by $n_{\rm u}$ the maximal number of connected components among all $\mathcal{W}(\lambda)$ NDs, and $n_{\rm d}$ the row dimension of $\bm{J}_{\rm w}$. Then a balance property of $\bm{c}$ is defined by Definition \ref{def-0-2-1}. 

\begin{definition}[$\mathcal{W}(\lambda)$-unique balance]\label{def-0-2-1}
    Let $\bm{b} = \bm{J} \bm{c} \in \mathbb{R}^{n_{\rm s}}$. Then $\bm{c}$ is $\mathcal{W}(\lambda)$-uniquely balanced if $\bm{b}_1=0$ and $\bm{b}_i\neq0$ for $i= 2,3,..., n_{\rm s}$, namely that $\bm{c}$ is uniquely balanced; and $\exists r \in \mathbb{R}$, \text{s.t.}, $\Vert \bm{E}_{\rm w} \bm{c} \Vert_{\infty} \leq r$ and $\Vert \bm{J}_{\rm w} \bm{c} \Vert_{-\infty} \geq n_{\rm u} r$. 
\end{definition}

\begin{remark}
    Unlike uniquely balanced $\bm{c}$ which has a trivial special case \cite{4-995-ea}, it may be far from easy to find such a special case for $\mathcal{W}(\lambda)$-uniquely balanced $\bm{c}$. An ad hoc approach is to solve the following mixed-integer LP:
    \begin{subequations}\label{eq-0-2-c-milp} 
        \begin{align}
            & \min_{\bm{c} \in \mathbb{R}^{n_{\rm n}}, \bm{b} \in \mathbb{R}^{n_{\rm s}}, \bm{\beta} \in \mathbb{B}^{n_{\rm s} - 1}, \bm{\gamma} \in \mathbb{B}^{n_{\rm d}}, r \in \mathbb{R} } ~~ r  \\
            \rm{s.t.} ~& \bm{b} = \bm{J} \bm{c}, \bm{b}_1 = 0, -r \bm{1}  \leq \bm{E}_{\rm w} \bm{c}  \leq r \bm{1} \\ 
            & \epsilon \!-\!  M\! \bm{\beta}_{i-1} \!\leq\! \bm{b}_i  \!\leq\! -  \epsilon \!+\! M (1 \!-\! \bm{\beta}_{i-1}), i=2, 3,..., n_{\rm s} \\
            &  n_{\rm u} r \bm{1} -  M \bm{\gamma} \leq  \bm{J}_{\rm w} \bm{c}  \leq - n_{\rm u} r \bm{1}  + M (1 -\bm{\gamma}) 
        \end{align}
    \end{subequations}
    where $\bm{\beta}$, $\bm{\gamma}$ and $r$ are auxiliary variables. This LP is derived from Definition \ref{def-0-2-1}, with the objective function set as any one with a finite lower bound, and constraints in the form $|\cdot| > 0$ being linearized. Any feasible solution of (\ref{eq-0-2-c-milp}) yields a $\mathcal{W}(\lambda)$-uniquely balanced $\bm{c}$.
\end{remark}

With $\bm{c}$ being an $n_{\rm n}$-dimensional constant $\mathcal{W}(\lambda)$-uniquely balanced vector, consider the following $\tilde{\bm{z}}$-parameterized LP: 
\begin{equation}\label{eq-0-2-5} 
        \!\! \min_{\bm{d}_+, \bm{d}_- \!\in \mathbb{R}^{n_{\rm n}}} \!\!\! \bm{1}^{\!T} \!(\!\bm{d}_+ \!\!+ \bm{d}_- \!)  ~\text{s.t.} ~ \mathcal{C}(0, \bm{c}, \bm{d}_+ \!\!-\! \bm{d}_-) \!\!\ni\!\! \tilde{\bm{z}}, \bm{d}_+ \!\!\geq\!\! \bm{0}, \bm{d}_- \!\!\geq\!\! \bm{0}  
\end{equation}
Denote by $\mathcal{D}(\tilde{\bm{z}})$ the set of all optima of (\ref{eq-0-2-5}), and $\Vert \bm{b} \Vert_{-2 \infty}$ the second smallest absolute value of entries of $\bm{b}$. Then NC of $\tilde{\bm{z}}$ and the optima of (\ref{eq-0-2-5}) are associated by Proposition \ref{prop-0-2-1}.

\begin{proposition}\label{prop-0-2-1}
    $\forall (\bm{d}_+^*, \bm{d}_-^*) \in \mathcal{D}(\tilde{\bm{z}})$, (\uppercase\expandafter{\romannumeral1}) $\bm{1}^T (\bm{d}_+^* + \bm{d}_-^*) = 0$ iff $\tilde{\bm{z}}$ is connected; 
    (\uppercase\expandafter{\romannumeral2}) $\bm{1}^T (\bm{d}_+^* + \bm{d}_-^*) \geq 2 \Vert \bm{b} \Vert_{-2 \infty} > 0$ iff $\tilde{\bm{z}}$ is disconnected; 
    and (\uppercase\expandafter{\romannumeral3}) $\bm{1}^T (\bm{d}_+^* + \bm{d}_-^*) \leq n_{\rm u} r$ iff $\tilde{\bm{z}}$ is connected or $\mathcal{W}(\lambda)$-disconnected. 
\end{proposition}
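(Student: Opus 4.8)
The plan is to first collapse the linear program (\ref{eq-0-2-5}) into a purely combinatorial quantity and then read off the three claims from it. Setting $\phi=0$ in (\ref{eq-0-2-4}), the big-$M$ inequalities force $\bm{\rho}$ to vanish on the switched-off branches of $\tilde{\bm{z}}$ and to satisfy $\bm{E}^T\bm{\vartheta}=\bm{\rho}$ on the switched-on ones, so membership $\tilde{\bm{z}}\in\mathcal{C}(0,\bm{c},\bm{d})$ is equivalent to the existence of a potential-induced (electrical) flow on the subgraph $\tilde{\bm{z}}$ with nodal injection $\bm{c}+\bm{d}$. By the same kernel/solvability argument that underlies the connectedness characterization of \cite{4-995-ea}, such a flow exists if and only if $\bm{c}+\bm{d}$ is balanced on every connected component of $\tilde{\bm{z}}$. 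Writing $\mathcal{V}_1,\dots,\mathcal{V}_m$ for these components and $s_k:=\sum_{j\in\mathcal{V}_k}c_j$, the feasible set of (\ref{eq-0-2-5}), after eliminating $\bm{\vartheta},\bm{\rho}$, is $\{\,\sum_{j\in\mathcal{V}_k}d_j=-s_k,\ \forall k\,\}$. Since $\bm{1}^T(\bm{d}_++\bm{d}_-)$ equals $\Vert\bm{d}\Vert_1$ at optimality and the cheapest way to realize a prescribed block-sum $-s_k$ is to load one coordinate, every optimum obeys $\bm{1}^T(\bm{d}_+^*+\bm{d}_-^*)=\sum_{k=1}^m|s_k|$. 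I would prove this reduction first and use it as the workhorse for all three parts; note that because the optimal value is unique, the value-based equivalences then hold for \emph{all} optima, as the statement requires.

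Each $s_k$ is precisely $\bm{b}_{i(k)}$ for the row $i(k)$ of $\bm{J}$ indexing the connected node-induced subgraph $\mathcal{V}_k$. For part (\uppercase\expandafter{\romannumeral1}), a connected $\tilde{\bm{z}}$ has the single component $\mathcal{V}$, with $s_1=\bm{1}^T\bm{c}=\bm{b}_1=0$, so the optimum is $0$; conversely any disconnection yields $m\ge 2$ proper connected subgraphs, each with $s_k=\bm{b}_{i(k)}\ne 0$ by unique balance, hence a strictly positive optimum. Part (\uppercase\expandafter{\romannumeral2}) sharpens this: every proper component satisfies $|s_k|=|\bm{b}_{i(k)}|\ge\min_{i\ne 1}|\bm{b}_i|=\Vert\bm{b}\Vert_{-2\infty}$, and with at least two components the optimum is $\ge 2\Vert\bm{b}\Vert_{-2\infty}>0$; the reverse implication is just the contrapositive of part (\uppercase\expandafter{\romannumeral1}).

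Part (\uppercase\expandafter{\romannumeral3}) is where the $\mathcal{W}(\lambda)$-unique-balance conditions enter, and it is the main obstacle. For the forward direction I would show that if $\tilde{\bm{z}}$ is $\mathcal{W}(\lambda)$-disconnected through $(\mathcal{L}_i,\mathcal{N}_i)$, then its main component has imbalance $|s_0|=|(\bm{E}_{\rm w}\bm{c})_i|\le\Vert\bm{E}_{\rm w}\bm{c}\Vert_\infty\le r$, while each remaining component is a connected piece of $\mathcal{N}_i$ that is itself the non-main set of an admissible $\mathcal{W}(\lambda)$ ND — obtained by deleting only the $\le\lambda$ edges of $\mathcal{L}_i$ incident to that piece — and hence also has imbalance $\le r$; since the number of components never exceeds $n_{\rm u}$, summation gives $\sum_k|s_k|\le n_{\rm u} r$ (and $r\ge 0$ makes the connected case $0\le n_{\rm u} r$ trivial). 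For the converse I would argue contrapositively: a disconnection that is \emph{not} a $\mathcal{W}(\lambda)$ ND must contain some component whose vertex set survives as a row of $\bm{J}_{\rm w}$ (being neither an $\mathcal{N}_i$, nor a complement $\mathcal{V}\setminus\mathcal{N}_i$, nor all of $\mathcal{V}$), so that $|s_k|=|(\bm{J}_{\rm w}\bm{c})_{i'}|\ge\Vert\bm{J}_{\rm w}\bm{c}\Vert_{-\infty}\ge n_{\rm u} r$ for that component; together with the strictly positive imbalance of at least one further component, the optimum then exceeds $n_{\rm u} r$. The delicate points, where I expect to spend most of the care, are verifying that every connected component of a $\mathcal{W}(\lambda)$ ND genuinely qualifies as the non-main part of an admissible $\mathcal{W}(\lambda)$ ND (so its imbalance is capped by $r$), and that any genuinely new disconnection is forced to expose a $\bm{J}_{\rm w}$-row; both rest on the minimality condition \textit{(\romannumeral3)} in the definition of $\mathcal{W}(\lambda)$ and on the precise row-deletion rule defining $\bm{J}_{\rm w}$.
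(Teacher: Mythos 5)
Your proposal follows essentially the same route as the paper's proof: you reduce feasibility of (\ref{eq-0-2-5}) to per-component balance of $\bm{c}+\bm{d}$ (the paper's Laplacian decomposition (\ref{eq-0-2-6})--(\ref{eq-0-2-9})), conclude that every optimum has value $\sum_k |s_k| = \sum_i |\bm{b}_{j(i)}|$, and then obtain (\uppercase\expandafter{\romannumeral1})--(\uppercase\expandafter{\romannumeral3}) by the same $\bm{E}_{\rm w}$ / $\bm{H}-\bm{E}_{\rm w}$ / $\bm{J}_{\rm w}$ row bookkeeping, including the same contrapositive step that a non-$\mathcal{W}(\lambda)$ disconnection must leave some component's row in $\bm{J}_{\rm w}$. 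The only local differences are minor or in your favor: you justify the optimal value by a direct $\ell_1$ argument (block sum forces $\Vert\bm{d}\Vert_1\geq|s_k|$, achieved by loading one coordinate) where the paper constructs an explicit primal--dual pair and invokes strong duality, and in part (\uppercase\expandafter{\romannumeral3}) you explicitly flag and sketch the lemma---that each non-main component of a $\mathcal{W}(\lambda)$ ND is itself an admissible $\mathcal{N}_{i'}$, obtained by deleting its boundary edges within $\mathcal{L}_i$, which the minimality condition \textit{(\romannumeral3)} indeed guarantees---whereas the paper's proof asserts the corresponding claim (``the row of $\bm{J}$ associated with $\bm{b}_{j(i)}$ is in $\bm{E}_{\rm w}$ or $\bm{H}-\bm{E}_{\rm w}$'') without comment.
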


\begin{proof}  % for online version only
    \textit{(\uppercase\expandafter{\romannumeral1})}. (Sufficiency) By $\bm{d}_+^* \geq 0$, $\bm{d}_-^* \geq 0$, and $\bm{1}^T (\bm{d}_+^* + \bm{d}_-^*) = 0$, we have $\bm{d}_+^* = \bm{d}_-^* = \bm{0}$, and thus $\tilde{\bm{z}} \in \mathcal{C}(0, \bm{c}, \bm{0})$, i.e., $\tilde{\bm{z}}$ is connected. (Necessity) Ignoring constraint $\tilde{\bm{z}} \in \mathcal{C}(\cdot)$ in (\ref{eq-0-2-5}), the unique optimum of (\ref{eq-0-2-5}) is $(\bm{d}_+^*, \bm{d}_-^*) = (\bm{0}, \bm{0})$. Since program (\ref{eq-0-2-5}) with $\tilde{\bm{z}}$ being connected is also feasible at $(\bm{d}_+^*, \bm{d}_-^*)$, $(\bm{d}_+^*, \bm{d}_-^*)$ is also the unique optimum of this program. Therefore, if $\tilde{\bm{z}}$ is connected, $\bm{1}^T (\bm{d}_+^* + \bm{d}_-^*) = 0$.

    \textit{(\uppercase\expandafter{\romannumeral2})}. (Sufficiency) By \textit{(\uppercase\expandafter{\romannumeral1})}, $\bm{1}^T (\bm{d}_+^* + \bm{d}_-^*) \neq 0$ yields that $\tilde{\bm{z}}$ is disconnected. (Necessity) By the proof of Theorem 1 in \cite{4-995-ea}, constraints in $\tilde{\bm{z}} \in \mathcal{C}(0, \bm{c}, \bm{d}_+ - \bm{d}_-)$ are equivalent to 
    \begin{equation}\label{eq-0-2-6}
        \bm{L}_{\mathcal{G}_{\tilde{\bm{z}}}} \bm{\vartheta} = \bm{c} + \bm{d}_+ - \bm{d}_-
    \end{equation}
    where $\bm{L}_{\mathcal{G}_{\tilde{\bm{z}}}}$ is the Laplacian matrix of graph $\mathcal{G}(\mathcal{V}, \tilde{\mathcal{E}})$ with $\tilde{\mathcal{E}}$ being the set of branches in $\tilde{\bm{z}}$. Assume that graph $\mathcal{G}(\mathcal{V}, \tilde{\mathcal{E}})$ contains $n_{\rm c}$ connected components, and let $[\bm{L}_{\mathcal{G}_{\tilde{\bm{z}}}}]_i$ be the principle submatrix of $\bm{L}_{\mathcal{G}_{\tilde{\bm{z}}}}$ associated with the $i$-th connected component, $[\bm{\vartheta}]_i$ be the subvector of $\bm{\vartheta}$ associated with the $i$-th connected component, and $[\bm{c}]_i$, $[\bm{d}_+]_i$, and $[\bm{d}_-]_i$ are analogous. Then (\ref{eq-0-2-6}) can be reformulated as
    \begin{equation}\label{eq-0-2-7}
        \!\!\!\!\!\! \begin{bmatrix}
            [\bm{L}_{\mathcal{G}_{\tilde{\bm{z}}}}]_1 \!\!\!\!&\!\!\!\!\cdots\!\!\!\!&\!\!\!\!\cdots\!\!&\!\! 0 \\
            \vdots\!\!\!\!&\!\!\!\! [\bm{L}_{\mathcal{G}_{\tilde{\bm{z}}}}]_2 \!\!\!\!&\!\!\!\!\!\!&\!\! \vdots  \\
            \vdots \!\!\!\!&\!\!\!\!\!\!\!\!&\!\!\!\! \ddots \!\!&\!\! \vdots \\
            0 \!\!\!\!&\!\!\!\!\cdots\!\!\!\!&\!\!\!\!\cdots\!\!&\!\! [\bm{L}_{\mathcal{G}_{\tilde{\bm{z}}}}]_{n_{\rm c}}
        \end{bmatrix} \!\!\!
        \begin{bmatrix}
            [\bm{\vartheta}]_{1} \\
            [\bm{\vartheta}]_{2} \\
            \vdots\\
            [\bm{\vartheta}]_{\!n_{\rm c}}  \!
        \end{bmatrix}
        \!\!\!=\!\!\! 
        \begin{bmatrix}
            [\bm{c}]_1 \!\!+\! [\bm{d}_+]_1 \!\!-\! [\bm{d}_-]_1 \\
            [\bm{c}]_2 \!\!+\! [\bm{d}_+]_2 \!\!-\! [\bm{d}_-]_2 \\
            \vdots \\
            [\bm{c}]_{\!n_{\rm c}} \!\!\!+\!\! [\bm{d}_+]_{\!n_{\rm c}} \!\!\!-\!\! [\bm{d}_-]_{\!n_{\rm c}} 
        \end{bmatrix}
        \!\!\!
    \end{equation}
    which together with the definition of Laplacian matrices, gives
    \begin{equation}\label{eq-0-2-8}
        \bm{1}^T ( [\bm{c}]_i + [\bm{d}_+]_i - [\bm{d}_-]_i ) = 0, i=1, 2, ..., n_{\rm c}
    \end{equation}
    Furthermore, for any connected component of $\mathcal{G}(\mathcal{V}, \tilde{\mathcal{E}})$, there exists a node induced subgraph of $\mathcal{G}(\mathcal{V}, \mathcal{E})$ whose node set is equal to that of the connected component and unequal to $\mathcal{V}$. By the fact that $\bm{c}$ is uniquely balanced, we have $\forall i=1,2,...,n_{\rm c}$, $\bm{1}^T [\bm{c}]_i = \bm{b}_{j(i)}$ with $j(i) \in \{2, 3,..., n_{\rm s}\}$, and thus 
    \begin{equation}\label{eq-0-2-9}
        \bm{1}^T ([\bm{d}_+]_i - [\bm{d}_-]_i ) = \bm{b}_{j(i)} , i=1, 2, ..., n_{\rm c}
    \end{equation}
    Then (\ref{eq-0-2-5}) is equivalent to 
    \begin{subequations}\label{eq-0-2-10}
        \begin{align}
            \!\! \min_{\bm{d}_+ \in \mathbb{R}^{n_{\rm n}}, \bm{d}_- \in \mathbb{R}^{n_{\rm n}}} & \bm{1}^T (\bm{d}_+ + \bm{d}_- ) \label{eq-0-2-10:1} \\
            \text{s.t.} ~& \bm{1}^T \! ([\bm{d}_+]_i \!-\! [\bm{d}_-]_i ) \!=\! \bm{b}_{j(i)} , i\!=\!1, 2, ..., n_{\rm c} \label{eq-0-2-10:2} \\
            & \bm{d}_+ \geq \bm{0}, \bm{d}_- \geq \bm{0} \label{eq-0-2-10:3}
        \end{align}
    \end{subequations}
    To obtain the optimum of (\ref{eq-0-2-10}), we write its dual problem as
    \begin{subequations}\label{eq-0-2-11}
        \begin{align}
            \max_{\bm{\zeta} \in \mathbb{R}^{n_{\rm c}}} & \sum_{i=1}^{n_{\rm c}} \bm{\zeta}_i \bm{b}_{j(i)} ~~\text{s.t.}~ -1 \leq \bm{\zeta}_i \leq  1, i=1,2,..., n_{\rm c}
        \end{align} 
    \end{subequations}
    Consider a solution of (\ref{eq-0-2-10}), denoted as $(\bm{d}_+^{\star}, \bm{d}_-^{\star})$, which satisfies that  $\forall i=1,2,..., n_{\rm c}$, if $\bm{b}_{j(i)} > 0$, then $[\bm{d}_-^{\star}]_{i} = \bm{0}$ and $[\bm{d}_+^{\star}]_{i}$ contains only one nonzero element being $\bm{b}_{j(i)}$; and if $\bm{b}_{j(i)} < 0$, then $[\bm{d}_+^{\star}]_{i} = \bm{0}$ and $[\bm{d}_-^{\star}]_{i}$ contains only one nonzero element being $-\bm{b}_{j(i)}$. Also consider a solution of (\ref{eq-0-2-11}), denoted as $\bm{\zeta}^{\star}$, which satisfies that $\forall i=1,2,..., n_{\rm c}$, if $\bm{b}_{j(i)} > 0$, then $\bm{\zeta}_i^{\star} = 1$, and if $\bm{b}_{j(i)} < 0$, then $\bm{\zeta}_i^{\star} = -1$. Since $(\bm{d}_+^{\star}, \bm{d}_-^{\star})$ and $\bm{\zeta}^{\star}$ are feasible solutions and $\bm{1}^T (\bm{d}_+^{\star} + \bm{d}_-^{\star} ) = \sum_{i=1}^{n_{\rm c}} \bm{\zeta}_i^{\star} \bm{b}_{j(i)}$, by strong duality, $(\bm{d}_+^{\star}, \bm{d}_-^{\star})$ is a global optimum of (\ref{eq-0-2-10}). Therefore, $\bm{1}^T (\bm{d}_+^* + \bm{d}_-^*) = \bm{1}^T (\bm{d}_+^{\star} + \bm{d}_-^{\star}) = \sum_{i=1}^{n_{\rm c}} |\bm{b}_{j(i)}| $, which together with $n_{\rm c} \geq 2$ and $j(i) \neq 1$, gives $\bm{1}^T (\bm{d}_+^* + \bm{d}_-^*) \geq 2 \Vert \bm{b} \Vert_{-2 \infty} > 0$.

    \textit{(\uppercase\expandafter{\romannumeral3})} (Sufficiency) By \textit{(\uppercase\expandafter{\romannumeral1})}, if $\bm{1}^{T} (\bm{d}_+^* + \bm{d}_-^*) = 0$, $\tilde{\bm{z}}$ is connected, and if $0 < \bm{1}^T (\bm{d}_+^* + \bm{d}_-^*) \leq n_{\rm u} r$, $\tilde{\bm{z}}$ is disconnected. Next, we prove that in the latter case, $\tilde{\bm{z}}$ is also $\mathcal{W}(\lambda)$ disconnected by contradiction. Assume that $\tilde{\bm{z}}$ is not $\mathcal{W}(\lambda)$ disconnected, then there must be at least one connected component associated with a row of $\bm{J}$ which is also in $\bm{J}_{\rm w}$. Following the proof of necessity of \textit{(\uppercase\expandafter{\romannumeral2})}, it indicates that $\exists i \in \{1,2,...,n_{\rm c}\}$ s.t. $|\bm{b}_{j(i)}| \geq \Vert \bm{J}_{\rm w} \bm{c} \Vert_{- \infty}$. Further by $\Vert \bm{J}_{\rm w} \bm{c} \Vert_{- \infty} \geq n_{\rm u} r$ , $|\bm{b}_{j(i)}| > 0$ with $i \in \{1,2,...,n_{\rm c}\}$, and $n_{\rm c} \geq 2$, we have $\bm{1}^T (\bm{d}_+^* + \bm{d}_-^*) = \sum_{i=1}^{n_{\rm c}} |\bm{b}_{j(i)}| > n_{\rm u} r$, which contradicts $\bm{1}^T (\bm{d}_+^* + \bm{d}_-^*) \leq n_{\rm u} r$. This proves that $\tilde{\bm{z}}$ is $\mathcal{W}(\lambda)$ disconnected, completing the proof of sufficiency.

    (Necessity) According to \textit{(\uppercase\expandafter{\romannumeral1})}, when $\tilde{\bm{z}}$ is connected, $\bm{1}^T (\bm{d}_+^* + \bm{d}_-^*) = 0 < n_{\rm u} r$. When $\tilde{\bm{z}}$ is $\mathcal{W}(\lambda)$ disconnected, following the proof of necessity of \textit{(\uppercase\expandafter{\romannumeral2})}, $\bm{1}^T (\bm{d}_+^* + \bm{d}_-^*) = \sum_{i=1}^{n_{\rm c}} |\bm{b}_{j(i)}|$. $\forall i \in \{1,2,..., n_{\rm c}\}$, the row of $\bm{J}$ associated with $\bm{b}_{j(i)}$ is in $\bm{E}_{\rm w}$ or $\bm{H} - \bm{E}_{\rm w}$. Since $\bm{1}^T  \bm{c} = 0$, $\Vert (\bm{H} - \bm{E}_{\rm w}) \bm{c} \Vert_{\infty} = \Vert \bm{E}_{\rm w} \bm{c} \Vert_{\infty} \leq r$, which together with $n_{\rm c} \leq n_{\rm u}$, gives $\bm{1}^T (\bm{d}_+^* + \bm{d}_-^*) \leq n_{\rm u} r$.
\end{proof}

For Criterion \ref{crit-0-2-1}, introduce a variable $\phi_1 \in \mathbb{B}$ to indicate if a contingency is a $\lambda$-branch one, which satisfies 
\begin{equation}\label{eq-0-2-12}
    1 - \phi_1 (\lambda + 1)  \leq  (n_{\rm b} - \bm{1}^T \bm{o}_{\rm b}) - \lambda \leq  (1 - \phi_1) (\eta  - \lambda ) 
\end{equation}
such that $\phi_1 = 1$ if $n_{\rm b} - \bm{1}^{T} \bm{o}_{\rm b} \leq \lambda$, namely that the contingency is a $\lambda$-branch one, and $\phi_1 = 0$ otherwise. By Proposition \ref{prop-0-2-1}-\textit{(\uppercase\expandafter{\romannumeral3})}, Criterion \ref{crit-0-2-1} can be formulated as 
\begin{equation}\label{eq-0-2-13}
    \bm{1}^T(\bm{d}_+^* + \bm{d}_-^*) \leq n_{\rm u} r + (1 - \phi_1) M 
\end{equation}

For Criterion \ref{crit-0-2-2}, introduce variables $\phi_2, \phi_3 \in \mathbb{B}$ satisfying
\begin{subequations}\label{eq-0-2-14} 
    \begin{align}
        & M \phi_2 \geq 2 \Vert \bm{b} \Vert_{- 2 \infty}  - \bm{1}^T (\bm{d}_+^* + \bm{d}_-^*) \geq M (\phi_2 - 1)  \\[-2pt]
        & M \phi_3 \geq \bm{1}^T (\bm{d}_+^* + \bm{d}_-^*) - n_{\rm u} r \geq M (\phi_3 - 1) 
    \end{align}
\end{subequations}
such that $\phi_2 = \phi_3 = 0$ iff $2 \Vert \bm{b} \Vert_{- 2 \infty}  \leq \bm{1}^T (\bm{d}_+^* + \bm{d}_-^*) \leq n_{\rm u} r$, i.e., $\tilde{\bm{z}}$ is $\mathcal{W}(\lambda)$ disconnected, by Proposition \ref{prop-0-2-1}. Then Criterion \ref{crit-0-2-2} can be formulated as 
\begin{equation}\label{eq-0-2-15} 
     \bar{\bm{z}} \in \mathcal{C}(\frac{\bm{1}^T (\bm{d}_+^* + \bm{d}_-^*)}{2 \Vert \bm{b} \Vert_{-2 \infty}}, \bm{c}, \bm{0} ), \bar{\bm{z}} \!+\! \bm{1} \!-\! \bm{o}_{\rm b} \!\in\! \mathcal{C}(\phi_2 \!+\! \phi_3, \bm{c}, \bm{0} )   
\end{equation}

Finally, the second-stage corrective control problem in (\ref{eq-0-2-1}) with the above formulation of NC becomes 
\begin{equation}\label{eq-0-2-16} 
    \!\!\!\!\!\! \min_{\bm{x}, \bm{\phi}, \bm{d}_+^*, \bm{d}_-^* } \!\! g(\! \bm{x} \!) ~\text{s.t.}~ \bm{x} \!\in\!\! \mathcal{Z}(\bm{p}_{\rm g},\! \bm{z},\! \bm{o} ), (\ref{eq-0-2-12})\text{-}(\ref{eq-0-2-15}), (\!\bm{d}_+^*, \bm{d}_-^*\!) \!\!\in\!\!\arg (\ref{eq-0-2-5})  \!\!\! 
\end{equation}
 with $\bm{\phi} =[\phi_1, \phi_2, \phi_3]^T \in \mathbb{B}^3$. The bi-level structure of (\ref{eq-0-2-16}) complicates the solution to the resulting SCOTS problem. However, since the lower level of (\ref{eq-0-2-16}) is an LP, it can be replaced by its Karush-Kuhn-Tucker (KKT) conditions, which further with the KKT complementarity condition reformulated in a mixed-integer form, is written as 
 \begin{equation}\label{eq-0-2-17}
         \!\!\! \bm{A} \bm{y} \!\!\leq\!\! \bm{w}(\tilde{\bm{z}}), \bm{A}^{T} \!\bm{\lambda} \!=\! \bm{h}, \bm{\lambda} \!\!\geq\!\! \bm{0}, \bm{w}(\tilde{\bm{z}}) \!-\! \bm{A} \bm{y} \!\!\leq\!\! M (\bm{1} \!-\! \bm{\xi}), \bm{\lambda} \!\!\leq\!\! M \bm{\xi}    
 \end{equation}
where $\bm{y} \!\!=\!\! [{\bm{d}_+^*}^T, {\bm{d}_-^*}^T, \bm{\vartheta}^T, \bm{\rho}^T ]^T$, $\bm{\lambda} \!\!\in\!\! \mathbb{R}^{4(n_{\rm n} \!+ n_{\rm b})}$, $\bm{\xi} \!\!\in\!\! \mathbb{B}^{4(n_{\rm n} + n_{\rm b})}$, and $\bm{h}$, $\bm{w}(\tilde{\bm{z}})$ and $\bm{A}$ are coefficient matrices of the equivalent compact form of (\ref{eq-0-2-5}), i.e., $\min_{\bm{y} \in \mathbb{R}^{n_{\rm b} + 3 n_{\rm n}}} \bm{h}^T \bm{y} ~\text{s.t.}~ \bm{A} \bm{y} \leq \bm{w}(\tilde{\bm{z}})$. Accordingly, (\ref{eq-0-2-16}) is equivalent to 
\begin{equation}\label{eq-0-2-18} 
     \min_{\bm{x}, \bm{\phi}, \bm{y}, \bm{\lambda}, \bm{\xi}}  g( \bm{x} ) ~\text{s.t.}~ \bm{x} \in \mathcal{Z}(\bm{p}_g, \bm{z}, \bm{o} ), (\ref{eq-0-2-12})\text{-}(\ref{eq-0-2-15}), (\ref{eq-0-2-17})   
\end{equation}
Replacing the second-stage problem in (\ref{eq-0-2-1}) by (\ref{eq-0-2-18}) yields the final SCOTS formulation considering Criterion \ref{crit-0-2-1} and Criterion \ref{crit-0-2-2}. Solution approaches for (\ref{eq-0-2-1}) are generally still applicable since the mixed-integer form of the second-stage problem is unchanged after this replacement.

\section{Case Study}

The proposed formulation of NC for SCOTS is tested using 4 systems: IEEE 14-bus, 30-bus, and 57-bus systems, and the 50Hertz control area of the German transmission network (E14, E30, E57, and DE for short). We set $(\eta, \lambda)= (2,1)$ for E14 and E30, and $(\eta, \lambda) = (3, 2)$ for E57 and DE. Model (\ref{eq-0-2-1}) with only NC constraints on $\bm{z}$ and that with the proposed NC formulation are solved respectively. All three versions of SCOTS, i.e., the stochastic, robust and DR, using the DC power flow model, are considered. For stochastic SCOTS, except for E14, only 100 contingency scenarios are considered for computational tractability. For detailed data and parameter setting we refer the reader to \cite{4-971-2}.

\begin{table}[h]
	\centering
    \caption{Results of NC of post-contingency and post-control topology.} 
    \setlength{\tabcolsep}{0pt} 

    \setlength{\aboverulesep}{0pt}
    \setlength{\belowrulesep}{0pt}
    \setlength{\extrarowheight}{.0ex}
 
    \begin{tabular*}{\hsize}{@{}p{0.77cm}@{}P{0.62cm}@{}>{\columncolor{gray!25}}P{0.6cm}@{}P{0.6cm}@{}>{\columncolor{gray!25}}P{0.6cm}@{}P{0.6cm}@{}>{\columncolor{gray!25}}P{0.6cm}@{}P{0.6cm}@{}>{\columncolor{gray!25}}P{0.6cm}@{}P{0.6cm}@{}>{\columncolor{gray!25}}P{0.6cm}@{}P{0.6cm}@{}>{\columncolor{gray!25}}P{0.6cm}}\toprule
    & \multicolumn{4}{c}{Stochastic SCOTS} & \multicolumn{4}{c}{Robust SCOTS} & \multicolumn{4}{c}{DR SCOTS}  
    \\\cmidrule(l){2-5}\cmidrule(l){6-9}\cmidrule(lr){10-13} 
               & $\tilde{r}$ & $\tilde{r}$  & $\bar{r}$ & $\bar{r}$ & $\tilde{r}$ & $\tilde{r}$  & $\bar{r}$ & $\bar{r}$ & $\tilde{r}$ & $\tilde{r}$  & $\bar{r}$ & $\bar{r}$  \\ \midrule[1pt]
    E14  & 25   & 0 & 7.79 & 0 & 30    & 0 & 5.45 & 0 & 25 & 0 & 8.37 & 0 \\
    E30  & 9.76 & 0 & 6.58 & 0 & 17.1  & 0 & 9.29 & 0 & 12.2   & 0  & 8.13 & 0  \\
    E57  & 2.22 & 0 & 11.2 & 0 & 6.49  & 0 & 13.7  & 0    & 8.60 & 0 & 6.69 & 0  \\
    DE   & 4.44 & 0 & 3.41 & 0 & 5.52  & 0 & 9.43  & 0    & 7.38  & 0 & 10.5 & 0  \\ \bottomrule
    \multicolumn{13}{@{}p{1\columnwidth}@{}}{
        \footnotesize{\textit{Note}: $\tilde{r}\%$ denotes the proportion of $\lambda$-branch contingencies whose corresponding $\tilde{\bm{z}}$ is disconnected but not inevitable disconnected; $\bar{r}\%$ is the proportion of contingencies with corresponding $\bar{\bm{z}}$ being connected or inevitable disconnected and $\bm{1}^{T}(\bm{z}_+ + \bm{z}_-) \neq 0$, where corrective line switching disconnects the network further. White and grey rows are results for model (\ref{eq-0-2-1}) with only NC constraints on $\bm{z}$ and that with the proposed NC formulation, respectively. Since it can be time-consuming to obtain the exact value of $\bar{r}$, except for E14, E30, and stochastic SCOTS, $\bar{r}$ is estimated by the Monte Carlo method and the same contingency samples are used to evaluate $\bar{r}$ for the two cases of (\ref{eq-0-2-1}).
      }}
    \end{tabular*} 
    \label{tab-0-2-1}  
\end{table}

Table \ref{tab-0-2-1} gives the statistical results on NC of $\tilde{\bm{z}}$ and $\bar{\bm{z}}$ in the optimum of each model. It is found that for all test systems and versions of SCOTS, $\tilde{r}$ and $\bar{r}$ are both larger than 0 when only NC constants on $\bm{z}$ is considered in the SCOTS model, while when the proposed NC constraints on $\tilde{\bm{z}}$ and $\bar{\bm{z}}$ are incorporated, $\tilde{r}$ and $\bar{r}$ both equal to 0. This indicates that in the former case, due to line switching in $\bm{z}$, some contingencies cause ND, which is not allowed by Criterion \ref{crit-0-2-1}. In fact, an $N-1$ contingency just causes such ND, according to the results on E14 and E30. In addition, corrective line switching also causes new ND when $\tilde{\bm{z}}$ is connected or inevitably disconnected. In contrast, by incorporating the proposed NC constraints into SCOTS models, $\tilde{\bm{z}}$ is always connected ignoring inevitable ND, and corrective line switching does not deteriorate NC when $\tilde{\bm{z}}$ is connected or inevitably disconnected.

\section{Conclusion}

In this paper, we addressed the issue of NC in SCOTS problems by proposing two criteria for NC and developing their tractable mathematical formulation. The criteria are aimed at preserving NC of post-contingency and post-control topologies as much as possible within reasonable limits. The mathematical formulation, in mixed-integer linear form, maintains the applicability of original solution approaches to SCOTS models. Numerical tests show that adding the proposed formulation into SCOTS models guarantees the satisfaction of the criteria which is unachievable with only NC constraints on $\bm{z}$. Future work should focus on existence of trivial special cases of $\mathcal{W}(\lambda)$-uniquely balanced $\bm{c}$ and heuristics to solve (\ref{eq-0-2-c-milp}).

% Future work will focus on scalable approaches to obtaining $\mathcal{W}(\lambda)$-uniquely balanced $\bm{c}$ since the current ad hoc approach by solving (\ref{eq-0-2-c-milp}) becomes computationally intractable for large-scale networks. Under some assumptions, $\mathcal{W}(\lambda)$-uniquely balanced $\bm{c}$ of a large graph can be formed by that of several small subgraphs whose associated (\ref{eq-0-2-c-milp}) can be solved efficiently. However, impacts of such assumptions on the optimum of SCOTS models need to be analyzed. On the other hand, existence of trivial special cases of $\mathcal{W}(\lambda)$-uniquely balanced $\bm{c}$ and heuristics to solve (\ref{eq-0-2-c-milp}) both deserve further investigation.

\ifCLASSOPTIONcaptionsoff
  \newpage
\fi

\bibliographystyle{IEEEtran}
\bibliography{References/4}

\end{document}